\newtheorem{proposition}{Proposition}
\newtheorem{lemma}{Lemma}
\journal{Applied Mathematics and Computation}
\begin{document}

\begin{frontmatter}

\title{A Parallel PageRank Algorithm For Undirected Graph}

\author[auth1]{Qi Zhang}
\author[auth1]{Rongxia Tang}
\author[auth1]{Zhengan Yao}
\author[auth2]{Zan-Bo Zhang \corref{cor1}}
\address[auth1]{Department of Mathematics, Sun Yat-sen University, China}         
\address[auth2]{School of Statistics and Mathematics, Guangdong University of Finance and Economics, China}
\cortext[cor1]{Zan-Bo Zhang, zanbozhang@gdufe.edu.cn}

\begin{abstract}
As a measure of vertex importance according to the graph structure, PageRank has been widely applied in various fields. While many PageRank algorithms have been proposed in the past decades, few of them take into account whether the graph under investigation is directed or not. Thus, some important properties of undirected graph\textemdash symmetry on edges, for example\textemdash is ignored. In this paper, we propose a parallel PageRank algorithm specifically designed for undirected graphs that can fully leverage their symmetry. Formally, our algorithm extends the Chebyshev Polynomial approximation from the field of real function to the field of matrix function. Essentially, it reflects the symmetry on edges of undirected graph and the density of diagonalizable matrix. Theoretical analysis indicates that our algorithm has a higher convergence rate and requires less computation than the Power method, with the convergence rate being up to 50\% higher with a damping factor of $c=0.85$. Experiments on six datasets illustrate that our algorithm with 38 parallelism can be up to 39 times faster than the Power method.
\end{abstract}

\begin{keyword}
	PageRank, undirected graph, Chebyshev Polynomial, parallel
\end{keyword}

\begin{comment}
\begin{highlights}
	\item According to the density of diagonalizable matrix, we present an innovative method which converts the problem of computing PageRank to the problem of approximating $f(x)=(1-cx)^{-1}, x \in (0,1)$.
	\item Based on the symmetry of undirected graph, we propose a parallel PageRank algorithm called CPAA via Chebyshev Polynomial approximation. 
\end{highlights}
\end{comment}

\end{frontmatter}

\section{Introduction}
PageRank\cite{brin2012reprint} is used to measure the importance of vertices according to graph structure. Generally, a vertex has higher importance if it is linked by more vertices or the vertices linking to it have higher importance themselves. Despite originating from search engine, PageRank has been applied in many fields such as social network analysis, chemistry, molecular biology and so on\cite{gleich2015pagerank,brown2017pagerank,Yin2020change}. 

In the past decades, plenty of PageRank algorithms had been proposed, among which Power method\cite{haveliwala2003computing} and Monte Carlo(MC) method\cite{avrachenkov2007monte} were two main approaches. Existing algorithms\cite{kamvar2004adaptive,kamvar2003extrapolation,wu2007power,zhu2005distributed} have advantages respectively, however, most of them do not distinguish the graph under investigation is directed or not. For example, Power method treats any graph as directed, while undirected graph widely exists in applications such as power grids, traffic nets and so on. Compared with directed graph, the most significant property of undirected graph is the symmetry on edges. Hence, simply treating undirected graph as directed omitted this property. A few algorithms based on the MC method benefited from symmetry. For example, Sarma\cite{sarma2013fast} accelerated MC method on undirected graph by borrowing ideas from distributed random walk\cite{das2013distributed}, Luo\cite{2019Distributed,luo2020improved} proposed Radar Push on undirected graph and decreased the variance of result. Thus, taking advantage of symmetry is a feasible solution of designing PageRank algorithm for undirected graph.

PageRank vector can be obtained by computing $(\bm{I}-c\bm{P})^{-1}\bm{p}$, where $\bm{I}$ is identity matrix, $\bm{P}$ is the possibility transition matrix of the graph and $\bm{p}$ is the personalized vector. The state-of-the-art PageRank algorithm, i.e., Forward Push(FP)\cite{2006Local}, approximates $(\bm{I}-c\bm{P})^{-1}\bm{p}$ by $\sum \limits_{i=0}^{k}(c\bm{P})^{i}\bm{p}$. From the perspective of polynomial approximation, $\sum \limits_{i=0}^{k}(cx)^{i}$ might not be the optimal approximating polynomial of $(1-cx)^{-1}$ as polynomial with higher degree is required to obtain an approximation with less error. Thus, constructing the optimal polynomial is a natural idea. 

Chebyshev Polynomial\cite{2011Chebyshev} is a set of orthogonal polynomials with good properties, by which the optimal uniform approximating polynomial of any real, continuous and square integrable function can be constructed. Directly extending Chebyshev Polynomial approximation from the field of real function to the field of matrix function is always infeasible. However, if $\bm{p}$ can be linearly represented by $\bm{P}$'s eigenvectors, then $(\bm{I}-c\bm{P})^{-1}\bm{p}$ can be converted to the linear combination of $\bm{P}$'s eigenvectors, where the coefficients are functions of the corresponding eigenvalues. By applying the Chebyshev Polynomial approximation in the coefficients, a new approach of computing PageRank can be obtained. To this end, two conditions required that (1)$\bm{P}$ has enough linearly independent eigenvectors and (2)$\bm{P}$'s eigenvalues are all real. The first condition is sufficed by the density of diagonalisable matrix. For undirected graph, the second condition can be ensured by the symmetry. Motivated by this, a parallel PageRank algorithm specially for undirected graph is proposed in this paper. The contributions are as below. 
\begin{enumerate}
	\item[(1)] We present a solution of computing PageRank via the Chebyshev Polynomial approximation. Based on the recursion relation of Chebyshev Polynomial, PageRank vector can be calculated iteratively.  
	\item[(2)] We proposes a parallel PageRank algorithm, i.e., the Chebyshev Polynomial Approximation Algorithm(CPAA). Theoretically, CPAA has higher convergence rate and less computation amount compared with the Power method.
	\item[(3)] Experimental results on six data sets suggest that CPAA markedly outperforms the Power method, when damping factor $c=0.85$, CPAA with 38 parallelism requires only 60\% iteration rounds to get converged and can be at most 39 times faster. 
\end{enumerate}

The remaining of this paper are as follows. In section 2, PageRank and Chebyshev Polynomial are introduced briefly. In section 3, the method of computing PageRank via Chebyshev Polynomial approximation is elaborated. Section 4 proposes CPAA and the theoretical analysis as well. Experiments on six data sets are conducted in section 5. A summarization of this paper is presented in Section 6.

\section{Preliminary}
In this section, PageRank and Chebyshev Polynomial are introduced briefly. 

\subsection{PageRank}
Given graph $G=(V,E)$, where $V=\{v_{1},v_{2},\cdots,v_{n}\}$, $E=\{(v_{i},v_{j}): i,j=1,2,\cdots,n\}$ and $\vert E \vert=m$. Let $\bm{A}=(a_{ij})_{n \times n}$ denote the adjacency matrix where $a_{ij}$ is 1 if $(v_{j},v_{i}) \in E$ and 0 else. Let $\bm{P}=(p_{ij})_{n \times n}$ denote the probability transition matrix where 
\begin{center}
	\small
	$
	p_{ij} = \left\{ 
	\begin{array}{ll}
		a_{ij} / \sum \limits_{i=1}^{n} {a_{ij}}, & if \quad \sum \limits_{i=1}^{n} {a_{ij} \neq 0},\\
		0,                                        & else.
	\end{array}
	\right.
	$
\end{center}
Let $\bm{d}=(d_{1},d_{2},...,d_{n})^{T}$ where $d_{i}$ is 1 if $v_{i}$ is dangling vertex and 0 else. Let $\bm{p}=(p_{1},p_{2},\cdots,p_{n})^{T}$ denote the $n$-dimensional probability distribution. Let $\bm{P}^{'}=\bm{P}+\bm{p}\bm{d}^{T}$ and $\bm{P}^{''}=c\bm{P}^{'}+(1-c)\bm{p}\bm{e}^{T}$, where $c\in(0,1)$ is damping factor and $\bm{e}=(1,1,...,1)^{T}$. Denote by $\bm{\pi}$ the PageRank vector. According to\cite{Berkhin2005A}, let $\bm{p}=\frac{\bm{e}}{n}$, one of PageRank's definitions is
\begin{equation}\label{eq21}
  \small
  \bm{\pi}=\bm{P}^{''}\bm{\pi},\pi_{i}>0,\sum \limits_{i=1}^{n}\pi_{i}=1.
\end{equation}
Based on random walk model, a random walk starts at random vertex, with probability $c$ walks according to graph and with probability $1-c$ terminates, then $\bm{\pi}$ is the probability distribution of random walk terminating at each vertex.

If $G(V,E)$ is undirected graph, then $\bm{d}=\bm{0}$ and $\bm{P}=\bm{P}^{'}$. Let $\bm{D}=diag\{d_{1},d_{2},\cdots,d_{n}\}$ where $d_{i}=\sum \limits_{j=1}^{n}{a_{ij}}$, then $d_{i}>0$ and $\bm{P}=\bm{A}\bm{D}^{-1}$. From Formula \eqref{eq21}, we have  
\begin{center}
	\small
	$(\bm{I}-c\bm{P})\bm{\pi}=(1-c)\bm{p}$.
\end{center}
Since $\vert\vert \bm{P} \vert\vert_{2}=1$, $\bm{I}-c\bm{P}$ is invertible, thus   
\begin{center}
	\small
	$\bm{\pi}=(1-c)(\bm{I}-c\bm{P})^{-1}\bm{p}$.
\end{center}
In fact, $(1-c)(\bm{I}-c\bm{P})^{-1}\bm{p}$ is the algebraic form of Forward Push. With initial mass distribution $\bm{p}$, each vertex reserves $1-c$ proportion of mass receiving from its source vertices and evenly pushes the remaining $c$ proportion to its target vertices, then $\bm{\pi}$ is the final mass distribution with  
\begin{equation}\label{eq22}
	\small
	\pi_{i}=\frac{\bm{e}^{T}_{i}(\bm{I}-c\bm{P})^{-1}\bm{p}}{\bm{e}^{T}(\bm{I}-c\bm{P})^{-1}\bm{p}}, 
\end{equation}
where $\bm{e}_{i}$ is $n$-dimensional vector with the $i_{th}$ element is 1 and 0 others. 
 
\subsection{Chebyshev Polynomial}
Chebyshev Polynomial is a set of polynomials defined as
\begin{center}
	\small
	$T_{n}(x)=\cos{(n\arccos{x})}$, $n=0,1,2,\cdots$.
\end{center}
$T_{n}(x)$ satisfies: 
\begin{enumerate}
	\small
	\item[$(1)$] $\vert T_{n}(x) \vert \le 1$.
	\item[$(2)$] $\frac{2}{\pi}\int_{-1}^{1}\frac{T_{m}(x)T_{n}(x)}{\sqrt{1-x^{2}}}dx=\left\{ 
	\begin{array}{ll} 
		2,           & if \quad m=n=0,\\
		\delta_{mn}, & else.
	\end{array} 
	\right.$
	\item[$(3)$] $T_{n+1}(x)=2xT_{n}(x)-T_{n-1}(x)$.
\end{enumerate}

Chebyshev Polynomial is usually utilized in constructing the optimal uniform approximating polynomial. For any given function $f(x) \in C(a,b)$ and $\int_{a}^{b}f^{2}(x)dx<\infty$, it follows that    
\begin{center}
	\small
	$f(x)=\frac{c_{0}}{2}+\sum \limits_{k=1}^{\infty}c_{k}T_{k}(x)$,
\end{center}
where $c_{k}=\frac{2}{b-a}\int_{a}^{b} \frac{1}{\sqrt{1-(\frac{2}{b-a}x-\frac{a+b}{b-a})^2}}f(x)T_{k}(\frac{2}{b-a}x-\frac{a+b}{b-a})dx$, $\frac{c_{0}}{2}+\sum \limits_{k=1}^{n}c_{k}T_{k}(x)$ is the optimal uniform approximating polynomial of $f(x)$. For 
\begin{center}
	\small
	$f(x)=(1-cx)^{-1},x \in (-1,1)$,  
\end{center}
where $c \in (0,1)$ is constant, we have $c_{k}=\frac{2}{\pi}\int_{0}^{\pi} \frac{\cos{(kt)}}{1-c\cos{t}}dt$, $\{c_{k}: k=0,1,2,\cdots\}$ decreases monotonously to 0, and 
\begin{center}
	\small
	$\lim \limits_{M \to \infty} \sum \limits_{k=M}^{\infty} c_{k}=\lim \limits_{M \to \infty} \frac{2}{\pi}\int_{0}^{\pi} \frac{\sum \limits_{k=M}^{\infty} \cos{(kt)}}{1-c\cos{t}}dt=0$.
\end{center} Thus, $\forall \epsilon >0, \exists M_{1} \in \mathbb{N}^{+}, s.t.$, $\forall M > M_{1}$, 
\begin{equation}\label{eq23}
	\small
	\max \limits_{x \in (-1,1)} \vert f(x)-(\frac{c_{0}}{2}+\sum \limits_{k=1}^{M}c_{k}T_{k}(x))\vert < \epsilon.  
\end{equation} 

\section{Computing PageRank by Chebyshev Polynomial approximation}
In this section, Formula \eqref{eq23} is extended from the field of real function to the filed of matrix function. We mainly prove that, $\forall \epsilon>0$, $\exists M_{1} \in \mathbb{N}^{+}$, $s.t.$, $\forall M > M_{1}$, 
\begin{equation}\label{eq31}
	\small
	\vert\vert (\bm{I}-c\bm{P})^{-1}\bm{p} - (\frac{c_{0}}{2}\bm{p}+\sum \limits_{k=1}^{M}c_{k}T_{k}(\bm{P})\bm{p})\vert\vert_{2}<\epsilon.
\end{equation}
To this end, we firstly give two lemmas as below. 
\begin{lemma}\label{lem1}\cite{de2020density}
	Diagonalizable matrices are Zariski dense.
\end{lemma}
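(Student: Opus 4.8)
The plan is to exhibit an explicit nonempty Zariski-open set consisting of diagonalizable matrices, and then invoke the irreducibility of affine space to upgrade ``nonempty open'' to ``dense''. Throughout I work in the space $M_{n}(\mathbb{C})\cong\mathbb{C}^{n^{2}}$ equipped with the Zariski topology, identifying a matrix with the $n^{2}$-tuple of its entries (the same argument applies over any algebraically closed field, and the case relevant here is $\mathbb{C}$). First I would recall the elementary sufficient condition: if $\bm{A}\in M_{n}(\mathbb{C})$ has $n$ pairwise distinct eigenvalues, then eigenvectors attached to distinct eigenvalues are linearly independent, so $\bm{A}$ admits a basis of eigenvectors and is diagonalizable. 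Hence it suffices to prove that the set $U$ of matrices with $n$ distinct eigenvalues is Zariski dense.

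Next I would show that $U$ is Zariski open. The characteristic polynomial $\chi_{\bm{A}}(t)=\det(t\bm{I}-\bm{A})$ is monic of degree $n$, and each of its coefficients is (up to sign) an elementary symmetric function of the eigenvalues, equivalently a signed sum of principal minors of $\bm{A}$; in particular every coefficient is a polynomial in the entries of $\bm{A}$. The discriminant of a monic degree-$n$ polynomial is itself a universal polynomial in that polynomial's coefficients (via the resultant of $\chi_{\bm{A}}$ and $\chi_{\bm{A}}'$), so $\Delta(\bm{A}):=\mathrm{disc}(\chi_{\bm{A}})$ is a polynomial in the entries of $\bm{A}$. Since $\chi_{\bm{A}}$ has $n$ distinct roots if and only if $\Delta(\bm{A})\neq 0$, we obtain $U=\{\bm{A}\in M_{n}(\mathbb{C}):\Delta(\bm{A})\neq 0\}$, a principal open set, hence Zariski open.

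Finally I would note that $U\neq\emptyset$, since for instance $\mathrm{diag}(1,2,\dots,n)\in U$; equivalently, $\Delta$ is not the zero polynomial. Now $\mathbb{C}^{n^{2}}$ is irreducible because $\mathbb{C}[x_{1},\dots,x_{n^{2}}]$ is an integral domain, and in an irreducible space every nonempty open set is dense: if $g$ vanishes on all of $U$ while $\Delta$ does not, then $g\Delta$ vanishes on $M_{n}(\mathbb{C})$, so $g\Delta\equiv 0$ and hence $g\equiv 0$; thus the Zariski closure of $U$ is all of $M_{n}(\mathbb{C})$. Since $U$ is contained in the set of diagonalizable matrices, that set is Zariski dense, which is the assertion of Lemma~\ref{lem1}.

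I do not expect a serious obstacle here, as the statement is classical; the one point deserving care is the claim that the discriminant is a genuine \emph{polynomial} in the matrix entries rather than merely a rational or continuous function of them, which is what lets us conclude Zariski (not just Euclidean) openness of $U$ — this is exactly where the resultant/symmetric-function description of $\mathrm{disc}(\chi_{\bm{A}})$ is needed. One should also be explicit that the ambient field is algebraically closed (or reduce to that case), so that ``$n$ distinct eigenvalues'' is the correct genericity condition implying diagonalizability.
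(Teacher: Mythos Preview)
Your argument is correct and is the standard, classical proof: the discriminant $\Delta(\bm{A})=\mathrm{disc}(\chi_{\bm{A}})$ is a polynomial in the entries of $\bm{A}$, its nonvanishing locus $U$ is a nonempty principal Zariski-open subset of the irreducible variety $M_{n}(\mathbb{C})\cong\mathbb{C}^{n^{2}}$, hence $U$ is dense, and $U$ consists of diagonalizable matrices. The care you take in noting that $\Delta$ is genuinely polynomial (via the resultant description) and that one works over an algebraically closed field is exactly right.

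As for comparison with the paper: there is nothing to compare against. The paper does not supply a proof of Lemma~\ref{lem1}; it simply states the lemma and attributes it to the cited reference~\cite{de2020density}. Your write-up therefore goes strictly beyond what the paper provides. If anything, it is worth remarking that the paper's downstream use of the lemma is slightly stronger than mere Zariski density: it assumes one can choose approximating matrices $\bm{P}_{t}$ with $\|\bm{P}_{t}\|_{2}=1$ and $\bm{P}_{t}\rightrightarrows\bm{P}$ in the Euclidean sense. Zariski density over $\mathbb{C}$ does imply Euclidean density (a nonzero polynomial cannot vanish on a Euclidean ball), which justifies the convergence; the additional normalization $\|\bm{P}_{t}\|_{2}=1$ is not literally a consequence of Lemma~\ref{lem1}, though it is harmless for the paper's purposes since one only needs $\|\bm{P}_{t}\|_{2}\le 1$ (or uniformly bounded) and this can be arranged by rescaling or by choosing $\bm{P}_{t}$ sufficiently close to $\bm{P}$.
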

Lemma \ref{lem1} implies that, for any given matrix $\bm{P} \in \mathbb{R}^{n \times n}$, there exists diagonalizable matrix sequence $\left \{\bm{P}_{t} : t=0,1,2,\cdots \right\}$, where $\bm{P}_{t} \in \mathbb{R}^{n \times n}$ and $\vert\vert \bm{P}_{t} \vert\vert_{2}=1$, $s.t.$, $\bm{P}_{t} \rightrightarrows \bm{P}$. Thus, we have 
\begin{center}
	\small
	$\lim \limits_{t \to \infty} \vert\vert \bm{P}_{t}-\bm{P} \vert\vert_{2} =0$.
\end{center}
Since both $\sum \limits_{k=0}^{\infty}(c\bm{P}_{t})^{k}$ and $\sum \limits_{k=1}^{M}c_{k}T_{k}(\bm{P}_{t})$ are uniformly convergent with respect to $t$, $\forall \epsilon>0$, $\exists N_{1} \in \mathbb{N}^{+}$, $s.t.$, $\forall t>N_{1}$, 
\begin{center}
	\small
	$\vert\vert \sum \limits_{k=0}^{\infty}(c\bm{P})^{k}\bm{p} - \sum \limits_{k=0}^{\infty}(c\bm{P}_{t})^{k}\bm{p} \vert\vert_{2} < \epsilon$, 
\end{center}
and 
\begin{center}
\small
$\vert\vert \sum \limits_{k=1}^{M}c_{k}T_{k}(\bm{P})\bm{p} - \sum \limits_{k=1}^{M}c_{k}T_{k}(\bm{P}_{t})\bm{p} \vert\vert_{2} < \epsilon$.
\end{center}
Since $(\bm{I}-c\bm{P})^{-1}\bm{p} = \sum \limits_{k=0}^{\infty}(c\bm{P})^{k}\bm{p}$, Formula \eqref{eq31} can be hold by proving 
\begin{equation}\label{eq32}
	\small
	\lim \limits_{t \to \infty} \vert\vert \sum \limits_{k=0}^{\infty}(c\bm{P}_{t})^{k}\bm{p} - (\frac{c_{0}}{2}\bm{p} + \sum \limits_{k=1}^{M}c_{k}T_{k}(\bm{P}_{t})\bm{p}) \vert\vert_{2} < \epsilon.
\end{equation}

\begin{lemma}\label{lem2}
	The whole eigenvalues of $\bm{P}$ are real.
\end{lemma}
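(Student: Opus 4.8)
The plan is to show that $\bm{P}$ is similar to a real symmetric matrix and then invoke the spectral theorem. Recall from Section~2 that for an undirected graph we have $\bm{A}^{T}=\bm{A}$, $\bm{D}=\mathrm{diag}\{d_{1},d_{2},\cdots,d_{n}\}$ with every $d_{i}=\sum_{j=1}^{n}a_{ij}>0$, and $\bm{P}=\bm{A}\bm{D}^{-1}$. Since the $d_{i}$ are strictly positive, the real diagonal matrices $\bm{D}^{1/2}=\mathrm{diag}\{\sqrt{d_{1}},\cdots,\sqrt{d_{n}}\}$ and $\bm{D}^{-1/2}$ are well defined and invertible. First I would set $\bm{S}:=\bm{D}^{-1/2}\bm{P}\bm{D}^{1/2}$ and compute directly $\bm{S}=\bm{D}^{-1/2}\bm{A}\bm{D}^{-1}\bm{D}^{1/2}=\bm{D}^{-1/2}\bm{A}\bm{D}^{-1/2}$; since $\bm{A}^{T}=\bm{A}$ and $(\bm{D}^{-1/2})^{T}=\bm{D}^{-1/2}$, this gives $\bm{S}^{T}=\bm{S}$, i.e.\ $\bm{S}$ is real symmetric.

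Next, because $\bm{S}$ is real symmetric, the spectral theorem guarantees that $\bm{S}$ is orthogonally diagonalizable and all of its eigenvalues are real. The identity $\bm{S}=\bm{D}^{-1/2}\bm{P}\bm{D}^{1/2}$ shows that $\bm{P}$ and $\bm{S}$ are conjugate by the invertible matrix $\bm{D}^{1/2}$, hence they share the same characteristic polynomial and therefore the same multiset of eigenvalues. Consequently every eigenvalue of $\bm{P}$ is real, which is exactly the assertion of the lemma. (As a byproduct the same computation shows that these eigenvalues coincide with those of the normalized adjacency matrix $\bm{D}^{-1/2}\bm{A}\bm{D}^{-1/2}$, so together with $\|\bm{P}\|_{2}=1$ they all lie in $[-1,1]$, matching the interval on which the Chebyshev expansion of $f(x)=(1-cx)^{-1}$ in Section~2 is performed.)

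I do not expect a genuine obstacle in this proof; the only points requiring attention are that $\bm{D}$ is truly invertible and that the similarity matrix is real. Both are handled by the undirected hypothesis: there are no dangling vertices, so $d_{i}>0$ for all $i$ (as already recorded in Section~2), which simultaneously makes $\bm{D}^{-1}$ exist and makes the real square roots $\sqrt{d_{i}}$ legitimate. An alternative, equivalent route is to observe that $\bm{P}$ is self-adjoint with respect to the inner product $\langle u,v\rangle_{\bm{D}^{-1}}=u^{T}\bm{D}^{-1}v$ (which is positive definite because $\bm{D}^{-1}$ has positive diagonal), since $\bm{P}^{T}\bm{D}^{-1}=\bm{D}^{-1}\bm{A}\bm{D}^{-1}=\bm{D}^{-1}\bm{P}$; a self-adjoint operator on a real inner-product space has real spectrum. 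The similarity argument above is the shortest and is the one I would write out.
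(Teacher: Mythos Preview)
Your argument is correct and is the standard textbook route: conjugate $\bm{P}=\bm{A}\bm{D}^{-1}$ by $\bm{D}^{1/2}$ to obtain the real symmetric matrix $\bm{D}^{-1/2}\bm{A}\bm{D}^{-1/2}$ and invoke the spectral theorem. The paper proceeds differently. It introduces the matrix $\bm{B}=\bm{D}-c\bm{A}$, first verifies that $\bm{B}$ is positive definite via the quadratic-form identity $\bm{x}^{T}(\bm{D}-c\bm{A})\bm{x}=\tfrac{1}{2}\sum_{i,j}a_{ij}(x_{i}^{2}+x_{j}^{2}-2cx_{i}x_{j})$, then checks that $\bm{B}\bm{P}^{T}=\bm{A}-c\bm{A}\bm{D}^{-1}\bm{A}$ is symmetric, and finally runs the classical $\lambda\bm{\chi}^{*}\bm{B}\bm{\chi}=\overline{\lambda}\bm{\chi}^{*}\bm{B}\bm{\chi}$ computation to force $\lambda\in\mathbb{R}$. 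In effect the paper is showing that $\bm{P}^{T}$ is self-adjoint for the inner product induced by $\bm{B}$; your ``alternative route'' does the same thing with the simpler weight $\bm{D}^{-1}$ in place of $\bm{B}$, which avoids dragging the damping factor $c$ into a statement that has nothing to do with it. Your similarity argument is shorter, and it delivers a bonus the paper's proof does not: similarity to a real symmetric matrix shows $\bm{P}$ is actually diagonalizable with a full set of eigenvectors, which would let one bypass the density-of-diagonalizable-matrices machinery used later in Section~3. The paper's approach, on the other hand, never needs square roots of the degrees, so it stays entirely within rational operations on the entries.
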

\begin{proof}[Proof]
	We first prove that $\bm{D}-c\bm{A}$ is positive definite matrix. $\forall \bm{x} \in \mathbb{R}^{n}$ and $\bm{x}\neq \bm{0}$, it follows that 
	\begin{center}
		\small
		$
		\begin{aligned}
			\bm{x}^{T}(\bm{D}-c\bm{A})\bm{x}=&\sum \limits_{i=1}^{n} {d_{i}x_{i}^{2}}-c\sum \limits_{i=1}^{n} \sum \limits_{j=1}^{n}{a_{ij}x_{i}x_{j}} \\
			=&\frac{1}{2}\sum \limits_{i=1}^{n} \sum \limits_{j=1}^{n}{a_{ij}(x_{i}^{2} + x_{j}^{2}-2cx_{i}x_{j})}. 
		\end{aligned}
		$
	\end{center}
	Since $x_{i}^{2} + x_{j}^{2}-2cx_{i}x_{j}>0, i,j=1,2,\cdots,n$, $\bm{D}-c\bm{A}$ is positive definite. 
	
	Let $\bm{B}=\bm{D}-c\bm{A}$, then 
	\begin{center}
		\small
		$(\bm{B}\bm{P}^{T})^{T}=(\bm{A}-c\bm{A}\bm{D}^{-1}\bm{A})^{T}=\bm{A}-c\bm{A}\bm{D}^{-1}\bm{A}=\bm{B}\bm{P}^{T}$. 
	\end{center}
	 $\forall \bm{y} \in \mathbb{C}^{n}$ and $\bm{y} \neq \bm{0}$, assuming $\bm{y}=\bm{y}_{1}+\mathrm{i}\bm{y}_{2}$, where $\bm{y}_{1}, \bm{y}_{2} \in \mathbb{R}^{n}$, we have 
	\begin{center}
		\small
		$
		\begin{aligned}
			\bm{y}^{*}\bm{B}\bm{y} &=(\bm{y}_{1}-\mathrm{i}\bm{y}_{2})^{T}\bm{B}(\bm{y}_{1}+\mathrm{i}\bm{y}_{2}) \\
			&=\bm{y}_{1}^{T}\bm{B}\bm{y}_{1} + \bm{y}_{2}^{T}\bm{B}\bm{y}_{2} + \mathrm{i}(\bm{y}_{1}^{T}\bm{B}\bm{y}_{2}-\bm{y}_{2}^{T}\bm{B}\bm{y}_{1}) \\
			&=\bm{y}_{1}^{T}\bm{B}\bm{y}_{1} + \bm{y}_{2}^{T}\bm{B}\bm{y}_{2} >0.
		\end{aligned}
		$
	\end{center}
	Denote by $\lambda$ the eigenvalue of $\bm{P}^{T}$, by $\bm{\chi}$ the corresponding eigenvector, then
	\begin{center}
		\small
		$\lambda \bm{\chi}^{*} \bm{B} \bm{\chi} =\bm{\chi}^{*}\bm{B}\bm{P}^{T}\bm{\chi} =\bm{\chi}^{*}(\bm{B}\bm{P}^{T})^{T}\bm{\chi} =(\bm{P}^{T}\overline{\bm{\chi}})^{T}\bm{B}\bm{\chi} =\overline{\lambda}\bm{\chi}^{*} \bm{B} \bm{\chi}.$
	\end{center}
	Since $\bm{\chi}^{*} \bm{B} \bm{\chi} > 0$, we have $\overline{\lambda}=\lambda$, i.e., $\lambda$ is real.
\end{proof}

Since $\bm{P}_{t}$ is diagonalizable, there exist $n$ linearly independent eigenvectors. Denote by $\bm{\chi}_{l}(t)$ the eigenvector of $\bm{P}_{t}$, by $\lambda_{l}(t)=\lambda_{l}^{(1)}(t)+\mathrm{i}\lambda_{l}^{(2)}(t)$ the corresponding eigenvalue, where $\vert\vert \bm{\chi}_{l}(t) \vert\vert_{2}=1$, $\lambda_{l}^{(1)}(t), \lambda_{l}^{(2)}(t) \in \mathbb{R}$ and $l=1,2,\cdots,n$, we have $\vert \lambda_{l}(t) \vert \leq 1$ and 
\begin{center}
	\small
	$\vert\vert \bm{P}\bm{\chi}_{l}(t)-\lambda_{l}(t)\bm{\chi}_{l}(t)\vert\vert_{2}=\vert\vert \bm{P}\bm{\chi}_{l}(t)-\bm{P}_{t}\bm{\chi}_{l}(t)\vert\vert_{2} \leq \vert\vert \bm{P}-\bm{P}_{t}\vert\vert_{2}$. 
\end{center}

Given $l$, since $\mathbb{C}$ is complete and $\{\lambda_{l}(t)\}$ is bounded, there exists convergent subsequence $\{\lambda_{l}(t_{k})\} \subseteq \{\lambda_{l}(t)\}$, let $\lambda_{l} = \lim \limits_{k \to \infty}\lambda_{l}(t_{k})$. Similarly, since $\mathbb{R}^{n}$ is complete and $\{\bm{\chi}_{l}(t_{k})\}$ is bounded, there exists convergent subsequence $\{\bm{\chi}_{l}(t_{k_{s}})\} \subseteq \{\bm{\chi}_{l}(t_{k})\}$, let $\bm{\chi}_{l} = \lim \limits_{s \to \infty}\bm{\chi}_{l}(t_{k_{s}})$. For convenience, we still mark $t_{k_{s}}$ as $t$ in the following. Since
\begin{center}
	\small
	$\lim \limits_{t \to \infty} \vert\vert \bm{P}\bm{\chi}_{l}(t)-\lambda_{l}(t)\bm{\chi}_{l}(t)\vert\vert_{2} \leq \lim \limits_{t \to \infty} \vert\vert \bm{P}-\bm{P}_{t} \vert\vert_{2}=0$,   
\end{center}
it follows that 
\begin{center}
	\small
	$\bm{P}\bm{\chi}_{l} = \lim \limits_{t \to \infty} \bm{P}\bm{\chi}_{l}(t) = \lim \limits_{t \to \infty} \lambda_{l}(t)\bm{\chi}_{l}(t) = \lambda_{l} \bm{\chi}_{l}$,   
\end{center}
i.e, $\lambda_{l}$ is the eigenvalue of $\bm{P}$ and $\bm{\chi}_{l}$ is the corresponding eigenvector. According to Lemma \ref{lem2}, $\lambda_{l}$ is real, thus for arbitrary $l=0,1,\cdots,n$, 
\begin{equation} \label{eq34}
	\lim \limits_{t \to \infty} \lambda_{l}^{(2)}(t) = 0. 
\end{equation}

For any given $t$, $\{\bm{\chi}_{l}(t): l=1, 2, \cdots, n\}$ is basis of $\mathbb{R}^{n}$, by which $\bm{p}$ can be linearly represented. Assuming
\begin{center}
	\small
	$\bm{p}=\sum \limits_{l=1}^{n} \alpha_{l}(t)\bm{\chi}_{l}(t)$,
\end{center}
where $\alpha_{l}(t) \in \mathbb{R}$ and $\vert \alpha_{l}(t) \vert < 1$, then
\begin{center}
  \small
  $\sum \limits_{k=0}^{\infty}(c\bm{P}_{t})^{k} \bm{p} = 
   \sum \limits_{k=0}^{\infty}(c\bm{P}_{t})^{k} \sum \limits_{l=1}^{n} \alpha_{l}(t) \bm{\chi}_{l}(t) =
   \sum \limits_{l=1}^{n} \alpha_{l}(t) \sum \limits_{k=0}^{\infty}(c\lambda_{l}(t))^{k} \bm{\chi}_{l}(t)$. 
\end{center}
Since both $\sum \limits_{k=0}^{\infty}(c\lambda_{l}(t))^{k}$ and $\sum \limits_{k=0}^{\infty}(c\lambda_{l}^{(1)}(t))^{k}$ are uniformly convergent with respect to $t$, we have 
\begin{center}
\small
$
\begin{aligned}
   &\vert\vert 
	\sum \limits_{l=1}^{n} \alpha_{l}(t) \sum \limits_{k=0}^{\infty}(c\lambda_{l}(t))^{k} \bm{\chi}_{l}(t) -
	\sum \limits_{l=1}^{n} \alpha_{l}(t) \sum \limits_{k=0}^{\infty}(c\lambda^{(1)}_{l}(t))^{k} \bm{\chi}_{l}(t)
	\vert\vert_{2} \\
  =&\vert 
	\sum \limits_{l=1}^{n} \alpha_{l}(t) (\sum \limits_{k=0}^{\infty}(c\lambda_{l}(t))^{k} -
	                                      \sum \limits_{k=0}^{\infty}(c\lambda^{(1)}_{l}(t))^{k})
	\vert 
	\cdot
	\vert\vert \bm{\chi}_{l}(t) \vert\vert_{2} \\
  =&\vert 
    \sum \limits_{l=1}^{n} \alpha_{l}(t)
    (\sum \limits_{k=0}^{\infty}(c(\lambda^{(1)}_{l}(t)+\mathrm{i}\lambda^{(2)}_{l}(t)))^{k}-\sum \limits_{k=0}^{\infty}(c\lambda^{(1)}_{l}(t))^{k})
    \vert, 
\end{aligned}
$
\end{center}
from Formula \eqref{eq34}, $\forall \epsilon > 0$, $\exists N_{2} \in \mathbb{N}^{+}$, $s.t$, $\forall t > N_{2}$, 
\begin{center}
  \small
  $\vert\vert \sum \limits_{l=1}^{n} \alpha_{l}(t) \sum \limits_{k=0}^{\infty}(c\lambda_{l}(t))^{k} \bm{\chi}_{l}(t) - \sum \limits_{l=1}^{n} \alpha_{l}(t) \sum \limits_{k=0}^{\infty}(c\lambda^{(1)}_{l}(t))^{k} \bm{\chi}_{l}(t) \vert\vert_{2} < \epsilon$, 
\end{center}
i.e., 
\begin{center}
	\small
	$\vert\vert \sum \limits_{k=0}^{\infty}(c\bm{P}_{t})^{k} \bm{p} - \sum \limits_{l=1}^{n} \alpha_{l}(t) \sum \limits_{k=0}^{\infty}(c\lambda^{(1)}_{l}(t))^{k} \bm{\chi}_{l}(t) \vert\vert_{2} < \epsilon$. 
\end{center}
Since $(1-c\lambda^{(1)}_{l}(t))^{-1}=\sum \limits_{k=0}^{\infty}(c\lambda^{(1)}_{l}(t))^{k}$, it follows that 
\begin{equation}\label{eq35}
\small
\vert\vert \sum \limits_{k=0}^{\infty}(c\bm{P}_{t})^{k} \bm{p} - \sum \limits_{l=1}^{n} \alpha_{l}(t) (1-c\lambda^{(1)}_{l}(t))^{-1} \bm{\chi}_{l}(t) \vert\vert_{2} < \epsilon. 
\end{equation}

Approximating $f(\lambda^{(1)}_{l}(t))=(1-c\lambda^{(1)}_{l}(t))^{-1}$ by Chebyshev Polynomial, from Formula (\ref{eq23}), $\forall \epsilon > 0$, $\exists M_{1} \in \mathbb{N}^{+}$, $s.t.$, $\forall M > M_{1}$, 
\begin{center}
	\small
	$
	\vert\vert \sum \limits_{l=1}^{n} \alpha_{l}(t)(1-c\lambda_{l}^{(1)}(t))^{-1}\bm{\chi}_{l}(t) - \sum \limits_{l=1}^{n}\alpha_{l}(t)(\frac{c_{0}}{2}+\sum \limits_{k=1}^{M}c_{k}T_{k}(\lambda_{l}^{(1)}(t)))\bm{\chi}_{l}(t) \vert\vert_{2} < \epsilon
	$, 
\end{center}
i.e.,
\begin{center}
	\small
	$
	\vert\vert \sum \limits_{l=1}^{n} \alpha_{l}(t)(1-c\lambda_{l}^{(1)}(t))^{-1}\bm{\chi}_{l}(t) - (\frac{c_{0}}{2}\bm{p} + \sum \limits_{k=1}^{M}c_{k}\sum \limits_{l=1}^{n}T_{k}(\lambda_{l}^{(1)}(t))\alpha_{l}(t)\bm{\chi}_{l}(t)) \vert\vert_{2} < \epsilon
	$. 
\end{center}
$T_{k}(x)$ is a polynomial of degree $k$, assuming 
\begin{center}
\small
$T_{k}(x)=\sum \limits_{i=0}^{k}a_{i}x^{i}$, 
\end{center}
where the coefficient $a_{i} \in \mathbb{R}$, we have 
\begin{center}
\small
$
\begin{aligned}
 T_{k}(\bm{P}_{t})\bm{p} &= \sum \limits_{i=0}^{k}a_{i}\bm{P}^{i}_{t}\bm{p}   \\
&= \sum \limits_{i=0}^{k}a_{i}\bm{P}^{i}_{t} \sum \limits_{l=1}^{n}\alpha_{l}(t)\bm{\chi}_{l}(t)  \\
&= \sum \limits_{l=1}^{n}\alpha_{l}(t) \sum \limits_{i=0}^{k}a_{i}\bm{P}^{i}_{t}\bm{\chi}_{l}(t)  \\
&= \sum \limits_{l=1}^{n}\alpha_{l}(t) \sum \limits_{i=0}^{k}a_{i}(\lambda^{(1)}_{l}(t)+\mathrm{i}\lambda^{(2)}_{l}(t))^{i}\bm{\chi}_{l}(t).  
\end{aligned}
$
\end{center}
From Formula \eqref{eq34}, $\forall k$ and $\forall \epsilon > 0$, $\exists N_{3} \in \mathbb{N}^{+}$, $s.t.$, $\forall t > N_{3}$,  
\begin{center}
\small
$\vert\vert \sum \limits_{l=1}^{n}\alpha_{l}(t) \sum \limits_{i=0}^{k}a_{i}(\lambda^{(1)}_{l}(t)+\mathrm{i}\lambda^{(2)}_{l}(t))^{i}\bm{\chi}_{l}(t) -
\sum \limits_{l=1}^{n}\alpha_{l}(t) \sum \limits_{i=0}^{k}a_{i}(\lambda_{l}^{(1)}(t))^{i}\bm{\chi}_{l}(t) \vert\vert_{2} < \epsilon$, 
\end{center}
i.e., 
\begin{center}
\small
$\vert\vert
 T_{k}(\bm{P}_{t})\bm{p} -\sum \limits_{l=1}^{n}T_{k}(\lambda_{l}^{(1)}(t))\alpha_{l}(t)\bm{\chi}_{l}(t) 
 \vert\vert_{2} < \epsilon$.  
\end{center}
Thus, $\forall M$ and $\forall \epsilon > 0$, $\exists N_{4} \in \mathbb{N}^{+}$, $s.t.$, $\forall t > N_{4}$, 
\begin{center}
\small
$\vert\vert
\sum \limits_{l=1}^{n} \alpha_{l}(t)(1-c\lambda_{l}^{(1)}(t))^{-1}\bm{\chi}_{l}(t)-
(\frac{c_{0}}{2}\bm{p} + \sum \limits_{k=1}^{M}c_{k}T_{k}(\bm{P}_{t})\bm{p})
\vert\vert_{2} < \epsilon$.  
\end{center}

From Formula \eqref{eq35}, $\forall \epsilon>0$, $\exists M_{1} \in \mathbb{N}^{+}$, $s.t.$, $\forall M>M_{1}$, $\exists N \in \mathbb{N}^{+}$, $s.t.$, $\forall t>N$, 
\begin{center}
\small
$\vert\vert \sum \limits_{k=0}^{\infty}(c\bm{P}_{t})^{k} \bm{p} - (\frac{c_{0}}{2}\bm{p}+\sum \limits_{k=1}^{M}c_{k}T_{k}(\bm{P}_{t})\bm{p}) \vert\vert_{2} < \epsilon$, 
\end{center}
i.e., 
\begin{center}
	\small
	$\lim \limits_{t \to \infty} \vert\vert \sum \limits_{k=0}^{\infty}(c\bm{P}_{t})^{k}\bm{p} - (\frac{c_{0}}{2}\bm{p} + \sum \limits_{k=1}^{M}c_{k}T_{k}(\bm{P}_{t})\bm{p}) \vert\vert_{2}=0$.  
\end{center}
Thus Formula \eqref{eq32} is hold, so as Formula \eqref{eq31}. 

It should be noted that, Formula \eqref{eq31} requires the whole eigenvalues of $\bm{P}$ are real, thus it might not be hold on some directed graphs. From the perspective of polynomial approximation, FP approximates $f(x)=(1-cx)^{-1}$ by taking $\{1,x,x^{2},\cdots\}$ as the basis, while Formula \eqref{eq31} approximates $f(x)$ by taking Chebyshev Polynomial $\{T_{k}(x):k=1,2,\cdots\}$ as basis. Since Chebyshev Polynomial are orthogonal, Formula \eqref{eq31} is more efficient.

\section{Algorithm and analysis}
In this section, a parallel PageRank algorithm for undirected graph is proposed and then the theoretical analysis is given. 

\subsection{Algorithm designing}
From Formula \eqref{eq31}, PageRank vector can be obtained by calculating $\{c_{k}T_{k}(\bm{P})\bm{p}:k=0,1,2.\cdots\}$: 
\begin{enumerate}
	\item [(1)] $\{c_{k} : k=0,1,2,\cdots\}$ relies on damping factor $c$ only, thus we can calculate and store them beforehand.
	\item [(2)] $\{T_{k}(\bm{P})\bm{p} : k=0,1,2,\cdots\}$ can be calculated iteratively by 
	\begin{center}
		\small
		$T_{k+1}(\bm{P})\bm{p}= 2\bm{P}T_{k}(\bm{P})\bm{p}-T_{k-1}(\bm{P})\bm{p}$,    
	\end{center}
    it is sufficed for each vertex reserving the $(k-1)_{th}$ and $k_{th}$ iteration results to generate the $(k+1)_{th}$ iteration result. 
	\item [(3)] During the same iteration round, each vertex calculates independently, thus the computing can be done in parallel.
\end{enumerate}
Restricted by the computing resource, assigning exclusive thread for each vertex is in feasible, we can invoke $K$ independent threads and assigning vertices to them. The Chebyshev Polynomial Approximation Algorithm(CPAA) is given as Algorithm \ref{CPAA}. 

\begin{algorithm}[htbp]
\footnotesize
\caption{ \textbf{Chebyshev Polynomial Approximation Algorithm} }
\begin{algorithmic}[1]
\Require{ \\$K$: The parallelism. 
		  \\$M$: The upper bound of iteration rounds. 
		  \\$\{c_{0}, c_{1}, \cdots, c_{M}\}$: The coefficient of Chebyshev Polynomial approximation.}
\Ensure{$\bm{\pi}$: PageRank vector.}
\State{Each vertex $v_{i}$ maintains $T_{i}, T_{i}^{'}, T_{i}^{''}$ and $\overline{\pi}_{i}$.}
\State{Assign vertices to $K$ calculating threads, denote by $S_{j}$ the set of vertices belonging to thread $j$.}
\State{Initially set $T_{i}=1$ and $\overline{\pi}_{i}=\frac{c_{0}}{2}T_{i}$.}
\State{}
\For {$k=1;k \le M;k++$}
  \ForAll{$j$} \Comment{[Do in parallel.]}
    \If{$k=1$}
      \For{$u \in S_{j}$}
        \State{$T_{u}^{'}=0$;}
        \For{$v_{i} \in N(u)$}\Comment{[$N(u)$ is the set of $u$'s adjacency vertices.]}
          \State{$T_{u}^{'}=T_{u}^{'}+\frac{T_{i}}{deg(v_{i})}$;}
        \EndFor
        \State{$\overline{\pi}_{u}=\overline{\pi}_{u} + c_{1}*T_{u}^{'}$;}
      \EndFor
    \Else
      \For{$u \in S_{j}$}
        \State{$T^{''}_{u}=0$;}
        \For{$v_{i} \in N(u)$} 
          \State{$T_{u}^{''}=T_{u}^{''} + \frac{T_{i}^{'}}{deg(v_{i})}$;}
        \EndFor
        \State{$T_{u}^{''}=2T_{u}^{''}-T_{u}$;}
        \State{$\overline{\pi}_{i}=\overline{\pi}_{i}+c_{k}*T_{u}^{''}$;}
      \EndFor
    \EndIf
  \EndFor
  \ForAll{$j$} \Comment{[Do in parallel.]}
    \For{$u \in S_{j}$}
      \State{$T_{u}=T_{u}^{'}$;}
      \State{$T_{u}^{'}=T_{u}^{''}$;}
    \EndFor
  \EndFor
\EndFor
\State{Calculate PageRank vector $\bm{\pi}$ by $\pi_{i}=\frac{\overline{\pi}_{i}}{\sum\limits_{i=1}^{n}\overline{\pi}_{i}}$.}
\end{algorithmic}\label{CPAA}
\end{algorithm}

From the perspective of vertex, CPAA is a loop of two stages, i.e., mass generating and mass accumulating. Initially, each vertex holds 1 mass and 0 accumulating mass. During the $k_{th}$ iteration, at generating stage, each vertex generates the $(k+1)_{th}$ iteration mass by reading the $k_{th}$ iteration mass from its adjacency vertices and subtracting its own $(k-1)_{th}$ iteration mass; at accumulating stage, each vertex adds the $(k+1)_{th}$ iteration mass weighted by $c_{k+1}$ to the accumulating mass. In the end, the proportion of accumulating mass of one vertex is its PageRank value. Although the mass each vertex holding may change at each iteration, the total mass of the graph is constant at $n$.

From the whole, the total accumulating mass $S$ of the graph is constant and falls into two parts, i.e., the accumulated and unaccumulated. CPAA is a process that converting the unaccumulated mass to the accumulated mass. Initially, the accumulated mass is 0 and the unaccumulated mass is $S$. While CPAA running, the accumulated mass increases and the unaccumulated mass decreases, specifically, at the $k_{th}$ iteration, the accumulated mass increases by $c_{k}n$ and the unaccumulated mass decreases by the same amount. When the accumulated mass is $S$ and the unaccumulated mass is 0, CPAA gets converged. The distribution of accumulated mass is the PageRank vector. 

It should be noted that, the dependency between the $k_{th}$ iteration and $(k+1)_{th}$ iteration can not be eliminated, paralleling only exists in the same iteration round.

\subsection{Algorithm analysis}
We analysis CPAA from three aspects, convergence rate, error and computation amount. 

\subsubsection{Convergence rate}
Convergence rate reflects the rapidity of unaccumulated mass decreasing to 0. Let 
\begin{center}
	\small
	$
	S_{k} = \left\{ 
	\begin{array}{ll}
		n\frac{c_{0}}{2},                                 & if \quad k=0,\\
		n\frac{c_{0}}{2} + n\sum \limits_{i=1}^{k} c_{i}, & else.
	\end{array}
	\right.
	$
\end{center}
$S_{k}$ is accumulated mass at the end of the $k_{th}$ iteration, and $S=\lim \limits_{k \to \infty} S_{k}$. Let 
\begin{center}
	\small
	$\sigma_{k}=\frac{S-S_{k}}{S-S_{k-1}}, k=1,2,3,\cdots$. 
\end{center}
Then $1-\sigma_{k}$ is the ratio of unaccumulated mass decreasing at the $k_{th}$ iteration. 

\begin{proposition}\label{pos1}
	$\sigma_{k}=\frac{c^{2}-(2-c)(1-\sqrt{1-c^{2}})}{c^{2}-c(1-\sqrt{1-c^{2}})}$, $c \in (0,1)$, $k=1,2,3,\cdots$.
\end{proposition}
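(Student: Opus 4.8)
The plan is to reduce the statement to an explicit evaluation of the Chebyshev coefficients $c_k$ of $f(x)=(1-cx)^{-1}$ on $(-1,1)$; once these are known in closed form, computing $S_k$, passing to the limit for $S$, and forming the ratio $\sigma_k$ are all routine geometric-series manipulations.

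First I would determine $c_k$. Instead of attacking the integral $\tfrac{2}{\pi}\int_0^\pi\tfrac{\cos kt}{1-c\cos t}\,dt$ directly, I would use the generating identity $\sum_{k=0}^{\infty}a^kT_k(x)=\tfrac{1-ax}{1-2ax+a^2}$ for $|a|<1$ (which follows from the three-term recurrence, property $(3)$). Trying the ansatz $c_k=\beta a^k$ for $k\ge 1$ together with $\tfrac{c_0}{2}=\tfrac{\beta}{2}$ gives
\[
	\frac{c_0}{2}+\sum_{k=1}^{\infty}c_kT_k(x)=\beta\Bigl(\sum_{k=0}^{\infty}a^kT_k(x)-\tfrac12\Bigr)=\beta\cdot\frac{1-a^2}{2\,(1-2ax+a^2)} .
\]
Since $1-2ax+a^2=(1+a^2)\bigl(1-\tfrac{2a}{1+a^2}x\bigr)$, this equals $(1-cx)^{-1}$ exactly when $\tfrac{2a}{1+a^2}=c$ and $\beta\tfrac{1-a^2}{2(1+a^2)}=1$. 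The equation $ca^2-2a+c=0$ has the unique root in the open unit disc, $a=\tfrac{1-\sqrt{1-c^2}}{c}$, which lies in $(0,1)$ for $c\in(0,1)$, and then $\beta=\tfrac{2(1+a^2)}{1-a^2}>0$. By uniqueness of the Chebyshev expansion these are the coefficients of the Preliminary (and $0<a<1$, $\beta>0$ also recover their stated monotone decay to $0$). The same values fall out of the Poisson-kernel integral $\int_0^\pi\tfrac{\cos k\theta}{1-2a\cos\theta+a^2}\,d\theta=\tfrac{\pi a^k}{1-a^2}$ after writing $1-c\cos t=\tfrac{1}{1+a^2}(1-2a\cos t+a^2)$, so this step is robust.

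With $c_k=\beta a^k$ $(k\ge1)$ and $\tfrac{c_0}{2}=\tfrac{\beta}{2}$, the rest is bookkeeping: $S_k=n\beta\bigl(\tfrac12+\tfrac{a(1-a^k)}{1-a}\bigr)$, hence $S=\lim_{k\to\infty}S_k=\tfrac{n\beta(1+a)}{2(1-a)}$ and $S-S_k=\tfrac{n\beta\,a^{k+1}}{1-a}$ for every $k\ge0$. Therefore
\[
	\sigma_k=\frac{S-S_k}{S-S_{k-1}}=\frac{n\beta a^{k+1}/(1-a)}{n\beta a^{k}/(1-a)}=a=\frac{1-\sqrt{1-c^2}}{c}
\]
for all $k\ge1$, independently of $k$. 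Finally I would check this matches the displayed formula: with $s:=\sqrt{1-c^2}$ (so $s^2=1-c^2$), the identity $\tfrac{1-s}{c}=\tfrac{c^2-(2-c)(1-s)}{c^2-c(1-s)}$ is, after clearing denominators and cancelling $c$, equivalent to $(1-s)(c-1+s)=c^2+c-2+2s-cs$, which holds on expanding the left side and substituting $s^2=1-c^2$.

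The step I expect to be the real obstacle is the closed-form evaluation of $c_k$: if one does not have the generating-function or Poisson-kernel identity at hand, the direct route through $\tfrac{2}{\pi}\int_0^\pi\tfrac{\cos kt}{1-c\cos t}\,dt$ (via residues, or a recursion in $k$) is considerably messier. One also has to be careful to pick the root $a\in(-1,1)$ of $ca^2-2a+c=0$ — the other root is $1/a>1$ — since that is precisely what makes both the Chebyshev series and the geometric series converge. Everything after the identification of $c_k$ is elementary.
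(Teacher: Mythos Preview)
Your argument is correct, and it reaches the same conclusion as the paper by a genuinely different route. The paper never writes down a closed form for $c_k$; instead it derives the three--term recurrence $c_{k-1}+c_{k+1}=\tfrac{2}{c}c_k$ directly from the integral $c_k=\tfrac{2}{\pi}\int_0^\pi\frac{\cos kt}{1-c\cos t}\,dt$, sums this recurrence over $k$ to express the tail $B_k=\sum_{i\ge k}c_i$ as $\tfrac{c_{k-1}-c_k}{2/c-2}$, evaluates $c_0,c_1,c_2$ one at a time to see that $c_0/c_1=c_1/c_2$, and then uses the recurrence once more to propagate the constant ratio. Your approach bypasses all of this by recognising via the generating function $\sum_{k\ge0}a^kT_k(x)=\tfrac{1-ax}{1-2ax+a^2}$ that the $c_k$ are a geometric sequence from the outset, which turns $S-S_k$ into a pure geometric tail and makes $\sigma_k=a$ immediate. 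The trade-off is that the paper's argument needs only the cosine product formula and a couple of definite integrals, whereas yours imports the Chebyshev generating identity (or, equivalently, the Poisson kernel) as a black box; on the other hand, your version explains \emph{why} the ratio is constant rather than discovering it after the fact, and it also transparently yields the later error formula $ERR_M=\tfrac{2a^{M+1}}{1+a}$ used in the paper. Your caution about selecting the root $a\in(0,1)$ of $ca^2-2a+c=0$ is exactly the point the paper handles implicitly through its explicit values of $c_0,c_1,c_2$.
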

\begin{proof}
	
	Firstly, let $B_{k}=\sum \limits_{i=k}^{\infty}c_{k}$, then $B_{k}$ is absolutely convergent, and
	\begin{center}
		\small
		$1-\sigma_{k}=\frac{c_{k}}{B_{k}}$.
	\end{center}
	since $c_{k}=\frac{2}{\pi} \int_{0}^{\pi} \frac{\cos{kt}}{1-c\cos{t}} dt$, it follows the recurrence relation
	\begin{center}
		\small
		$
		c_{k-1}+c_{k+1} = \frac{2}{\pi} \int_{0}^{\pi} \frac{2\cos{kt}\cos{t}}{1-c\cos{t}} dt = \frac{2}{c} c_{k}
		$.
	\end{center}
	 According to
	\begin{center}
		\small
		$
		\begin{aligned}
			& c_{k-1}+c_{k+1} = \frac{2}{c} c_{k}, \\
			& c_{k}  +c_{k+2} = \frac{2}{c} c_{k+1}, \\
			& \cdots,            
		\end{aligned}
		$
	\end{center}
	we have
	\begin{center}
		\small
		$B_{k}+c_{k-1}+B_{k}-c_{k}=\frac{2}{c}B_{k}$,  
	\end{center}
    thus $B_{k}=\frac{c_{k-1}-c_{k}}{\frac{2}{c}-2}$ and $\frac{c_{k}}{B_{k}} = \frac{\frac{2}{c}-2}{\frac{c_{k-1}}{c_{k}}-1}$.
	
	Since $c_{0}=\frac{2}{\sqrt{1-c^{2}}}$, 
	      $c_{1}=\frac{2}{c}(\frac{1-\sqrt{1-c^{2}}}{\sqrt{1-c^{2}}})$, $c_{2}=\frac{2}{c^{2}}\frac{2(1-\sqrt{1-c^{2}})-c^{2}}{\sqrt{1-c^{2}}}$, 
	it follows that 
	\begin{center}
		\small
		$
		\frac{c_{0}}{c_{1}} = \frac{c_{1}}{c_{2}}=\frac{c}{1-\sqrt{1-c^{2}}}   
		$. 
	\end{center}
	According to $\frac{c_{0}}{c_{1}}+\frac{c_{2}}{c_{1}}=\frac{2}{c}$, we have 
	\begin{center}
		\small
		$
		\frac{c_{0}}{c_{1}} = \frac{c_{1}}{c_{2}}= \cdots =\frac{c_{k-1}}{c_{k}}=\frac{c}{1-\sqrt{1-c^{2}}}   
		$. 
	\end{center}
    Thus, $\sigma_{k}=\frac{c^{2}-(2-c)(1-\sqrt{1-c^{2}})}{c^{2}-c(1-\sqrt{1-c^{2}})}$.
\end{proof}

Proposition(\ref{pos1}) shows that $\sigma_{k}$ relates with $c$ only. For any given $c$, $\sigma_{k}$ is a constant. For convenience, we mark $\sigma_{k}$ as $\sigma_{c}$ in the following. While CPAA running, the unaccumulated mass decreases by $1-\sigma_{c}$ proportion per iteration. Referring to the Power method, $\sigma_{c}$ indicates the convergence rate of CPAA. Since
\begin{center}
	\small
	$\frac{\sigma_{c}}{c}=\frac{c-(\frac{2}{c}-1)(1-\sqrt{1-c^{2}})}{c^{2}-c(1-\sqrt{1-c^{2}})}<1$, $c \in (0,1)$, 
\end{center}
CPAA has higher convergence rate compared with the Power method. Figure \ref{cpaa-sigma} shows the relation between $\sigma_{c}$ and $c$. When $c=0.85$, $\sigma_{c}=0.5567$, which implies that CPAA only takes 65\% iteration rounds of Power method to get converged.
\begin{figure}[htbp]
	\centering
	\includegraphics[width=0.48\textwidth,height=0.28\textheight]{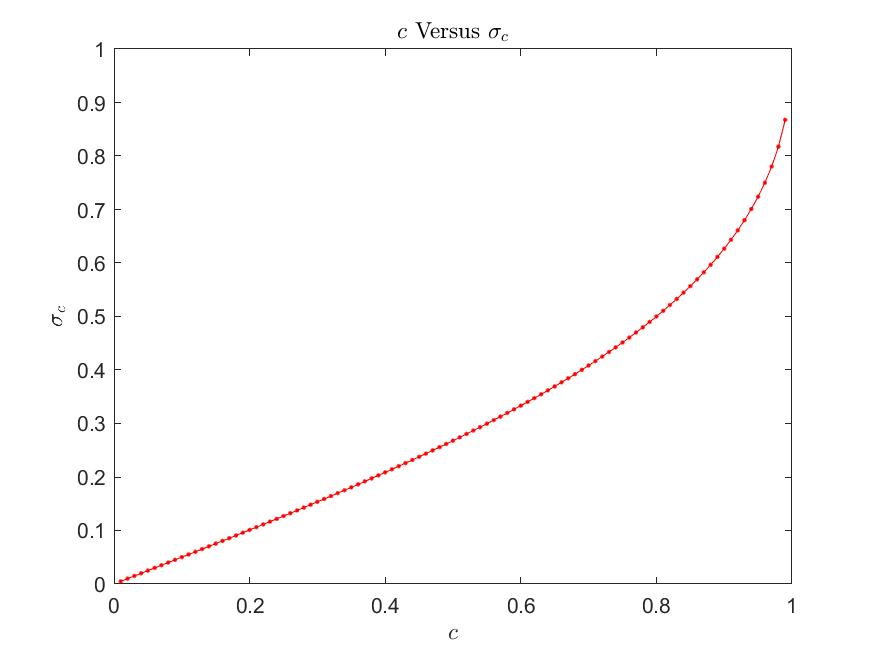}
	\caption{Convergence rate}
	\label{cpaa-sigma}
\end{figure}

\subsubsection{Error}
It is hard to estimate the error of each vertex, we discuss the relation between the relative error and the upper bound of iteration rounds from the whole. Denote by 
\begin{center}
	\small
	$ERR_{M}=1-\frac{S_{M}}{S}$
\end{center}
the relative error, then we have 
\begin{center}
	\small
	$S_{M}=\frac{c_{0}}{2}+\sum \limits_{k=1}^{M}c_{k}=\frac{c_{0}}{2}+\frac{1-\beta^{M}}{1-\beta} \beta c_{0}$, 
\end{center}
thus,
\begin{equation}\label{eq41}
	\small
	ERR_{M}=1-\frac{S_{M}}{S}=\frac{2\beta^{M+1}}{1+\beta}, 
\end{equation}
where $\beta=\frac{1-\sqrt{1-c^{2}}}{c}$. Figure \ref{cpaa-err} shows that $ERR_{M}$ decreases exponentially with respect to $M$, when $c=0.85$, $ERR_{M}$ can be less than $10^{-4}$ within 20 iteration rounds. It should be noted that the error estimation above mentioned is very rough as taking no graph structure into consideration. 
\begin{figure}[h]
	\centering
	\includegraphics[width=0.48\textwidth,height=0.28\textheight]{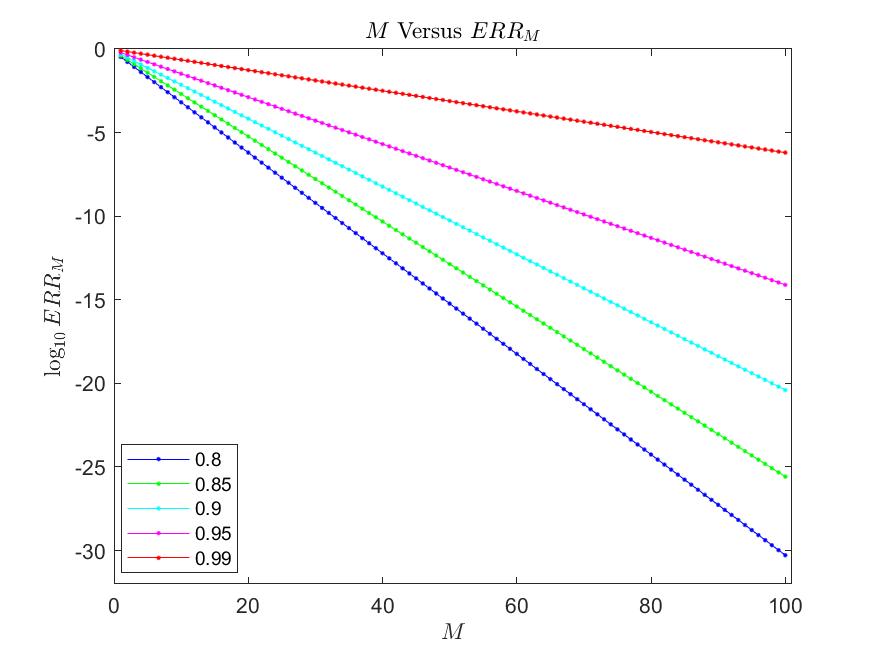}
	\caption{Relative error}
	\label{cpaa-err}
\end{figure}

\subsubsection{Computation amount}
While CPAA running, for vertex $v$, there are $deg(v)$ multiplications and $deg(v)$ additions at generating stage, and $1$ addition at accumulating stage. Therefore, the computation consists of $m$ multiplications and $m+2n$ additions per iteration. The total computation are  $Mm$ multiplications and $M(m+2n)$ additions. 

\section{Experiments}
In this section, we experimentally demonstrate CPAA. Firstly, the hardware and measurements are introduced briefly. Then, the convergence of CPAA is illustrated. At last, the comparison of CPAA with other algorithms is elaborated. 

\subsection{Experimental setting}
All algorithms in this experiment are implemented with C++ on serves with Intel(R) Xeon(R) Silver 4210R CPU 2.40GHz 40 processors and 190GB memory. The operation system is Ubuntu 18.04.5 LTS. Six data sets are illustrated in table \ref{table1}, where $n$, $m$ and $deg=\frac{m}{n}$ represent the number of vertices, the number of edges and the average degree respectively. The CPU time consumption $T$, iteration rounds $k$ and max relative error $ERR=\max \limits_{v_{i} \in V} \frac{\vert \overline{\pi}_{i}- \pi_{i} \vert}{\pi_{i}}$ are taken to estimate the algorithms, where the true PageRank value $\pi_{i}$ is obtained by Power method at the $210_{th}$ iteration. The damping factor $c=0.85$. 

\begin{table}[h]
\footnotesize
\centering
\begin{tabular}{l c c c }
	\toprule
	Data sets           &$n$            &$m$            &$deg$         \\
	\midrule
	NACA0015           &1039183         &6229636        &5.99          \\
	delaunay-n21       &2097152         &12582816       &6.0           \\ 
    M6                 &3501776         &21003872       &6.0           \\
	NLR                &4163763         &24975952       &6.0           \\
	CHANNEL            &4802000         &85362744       &17.78         \\
	kmer-V2            &55042369        &117217600      &2.13          \\
	\bottomrule
\end{tabular}
\caption{Data sets}
\label{table1}
\end{table}

\subsection{Convergence}
The relation of $ERR$, $T$ and $k$ on six data sets is illustrated in Figure \ref{converge}. 
\begin{enumerate}
	\item[(1)] The blue lines show that $ERR$ has a negative exponential relation with $k$, which is in consistent with Formula (\ref{eq41}). With the increasing of iteration rounds, the result gets closer to the true value. The max relative error can be less than $10^{-4}$ within 20 iterations.
	\item[(2)] The red lines show that $T$ has a positive linear relation with $k$, thus, It also implies that the computation of each iteration is constant. On kmer-V2, the max relative error is lower than $10^{-2}$ within 15 seconds. 
	\item[(3)] The blue lines almost stay horizontal after 50 iteration rounds, which seems CPAA has limitation in precision. We believe it is not a algorithm flaw, but caused by the insufficiency of C++'s DOUBLE data type, whose significant digit number is 15. The computer considers $1$ is equal to $1+10^{-16}$, so that, $c_{k}$ with $k>50$ can not change the result.
\end{enumerate}
\begin{figure}[h]
	\centering
	\includegraphics[width=0.32\textwidth,height=0.20\textheight]{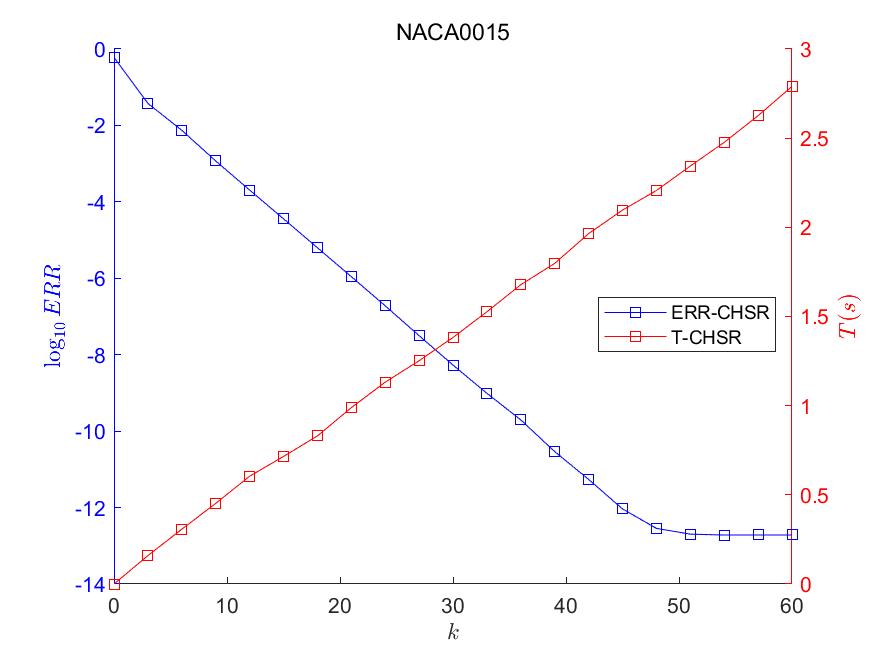}
	\includegraphics[width=0.32\textwidth,height=0.20\textheight]{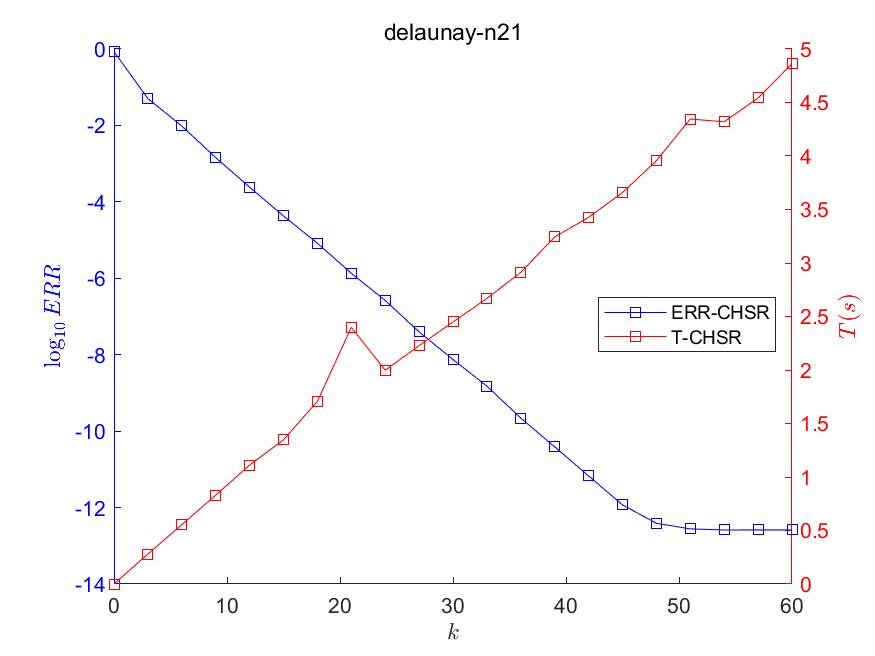}
	\includegraphics[width=0.32\textwidth,height=0.20\textheight]{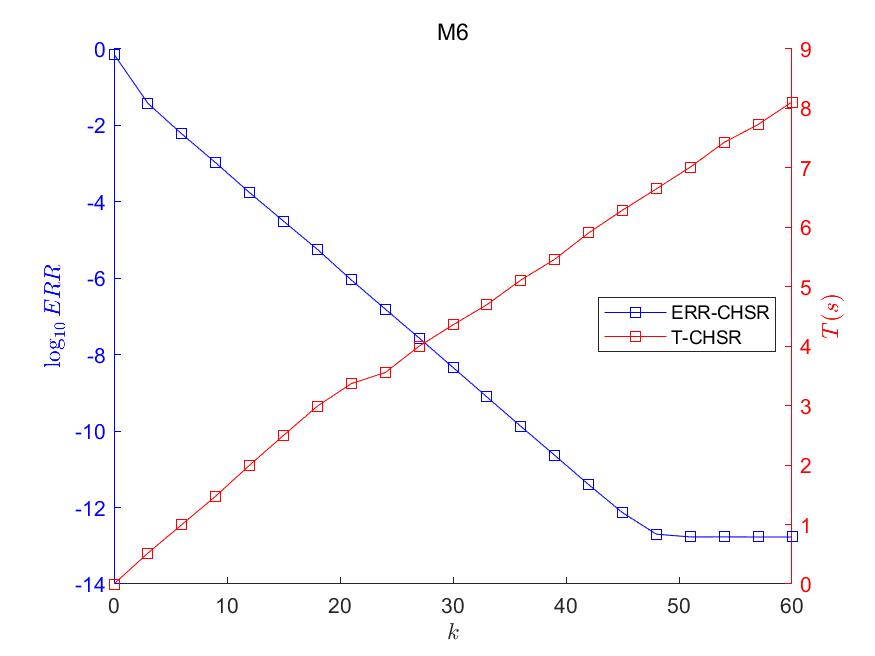}
	\includegraphics[width=0.32\textwidth,height=0.20\textheight]{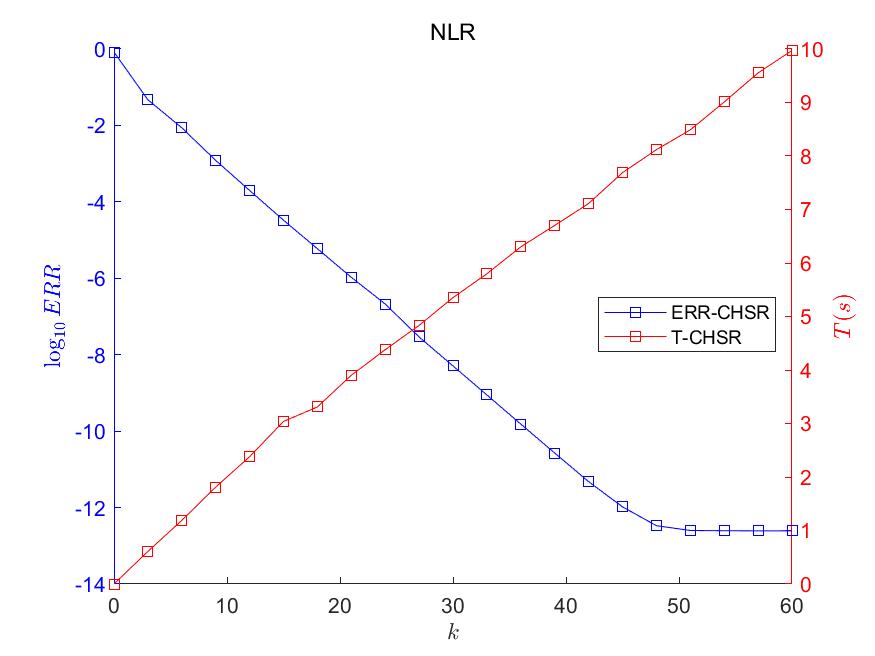}
	\includegraphics[width=0.32\textwidth,height=0.20\textheight]{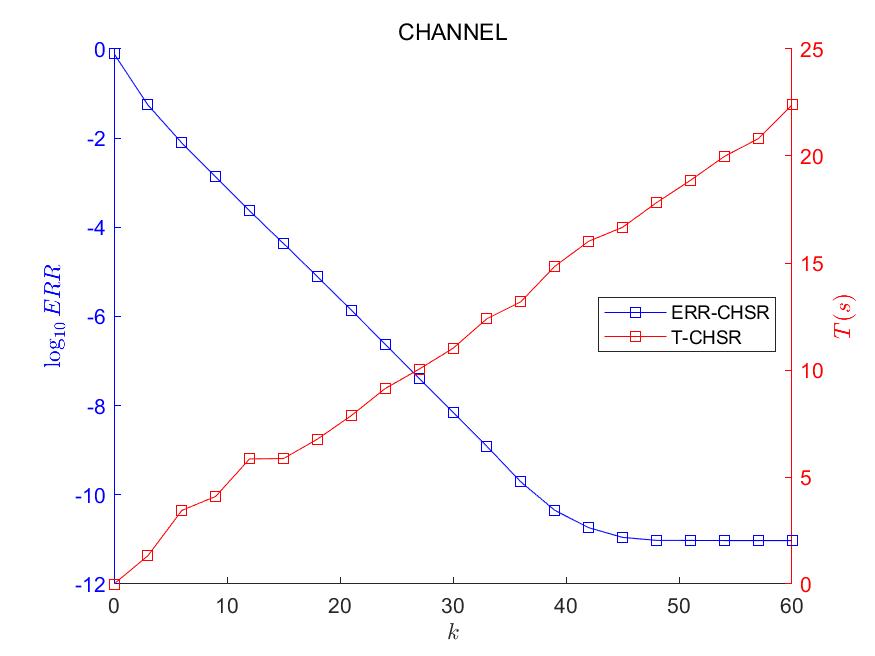}
	\includegraphics[width=0.32\textwidth,height=0.20\textheight]{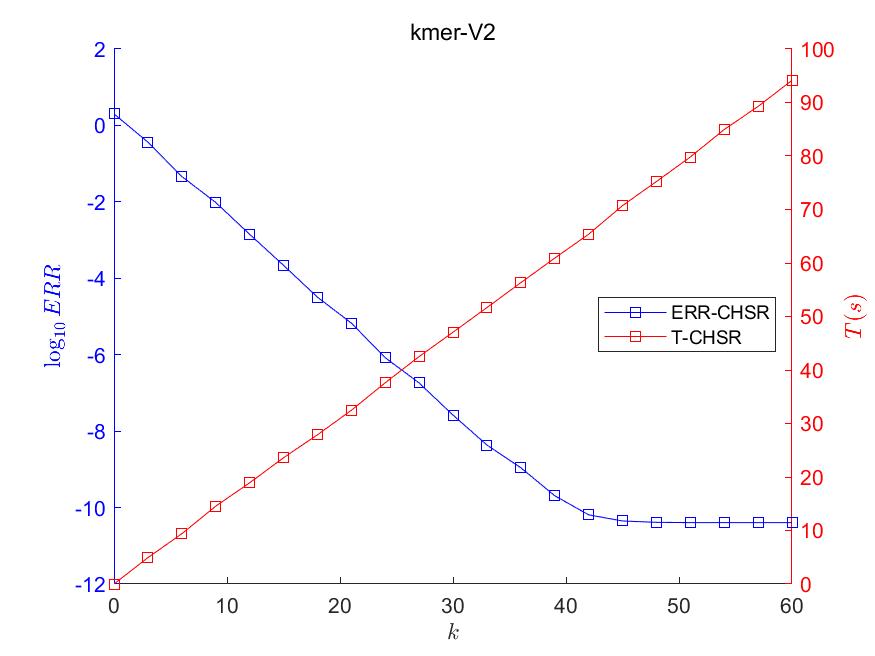}
	\caption{$k$ versus $ERR$ and $T$}
	\label{converge}
\end{figure}

\subsection{Comparison with other algorithms}
We compare CPAA with Power method (SPI) \cite{brin2012reprint,page1999pagerank}, parallel Power method (MPI) \cite{duong2012parallel} and an improvement of FP, i.e., IFP1 \cite{ifp1}. Both MPI, IFP1 and CPAA execute with 38 parallelism. The relation of $ERR$ and $T$ are illustrated in Figures \ref{err}. Table \ref{table3} shows iteration rounds and CPU time consumption when $ERR < 10^{-3}$. 
\begin{enumerate}
	\item[(1)] The green, cyan and magenta lines are lower than the blue lines, which indicates that both MPI, IFP1 and CPAA are much faster than SPI. For example, on kmer-V2, it takes near 900 seconds for SPI to get $ERR<10^{-3}$, while the remaining algorithms cost less than 90 seconds. It implies that parallelization is an effective solution to accelerate PageRank computing. 
	\item[(2)] The magenta lines are lower than the green and cyan lines, which implies that CPAA outperforms MPI and IFP1. For example, on kmer-V2, it takes about 67 seconds for MPI to get $ERR<10^{-3}$, while CPAA costs less than 24 seconds. Since the whole algorithms are executed with the same parallelism, we believe the advantages are owe to the higher convergence rate of CPAA. 
	\item[(3)] On almost all six data sets, Power method takes 20 iterations to get $ERR < 10^{-3}$, while CPAA only takes 12 iterations, and can be at most 39 times faster than SPI. It implies that CPAA takes 60\% iterations of Power method to get converged, which is consistent with the theoretical analysis. 
\end{enumerate}

\begin{figure}[h]
	\centering
	\includegraphics[width=0.32\textwidth,height=0.20\textheight]{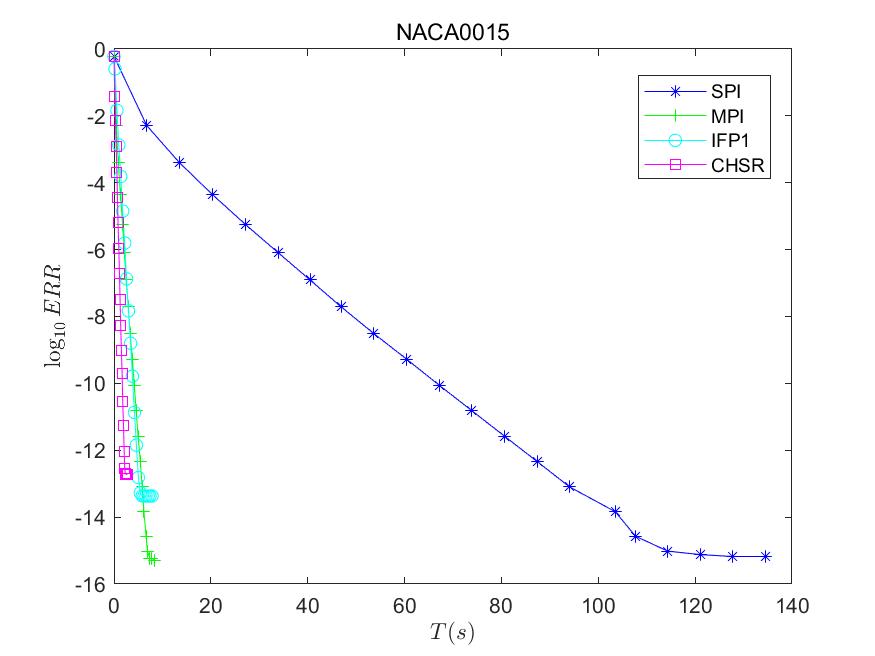}
	\includegraphics[width=0.32\textwidth,height=0.20\textheight]{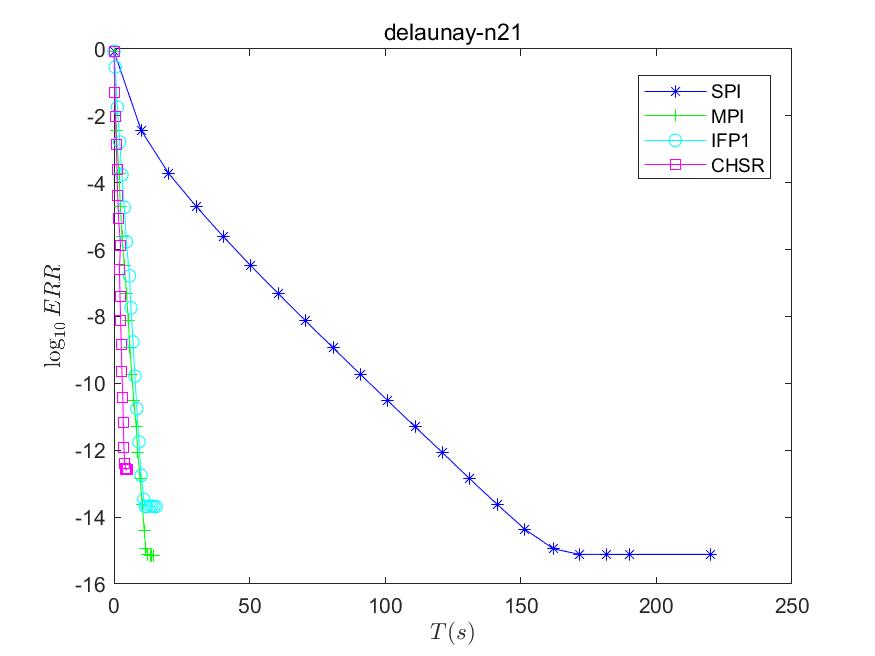}
	\includegraphics[width=0.32\textwidth,height=0.20\textheight]{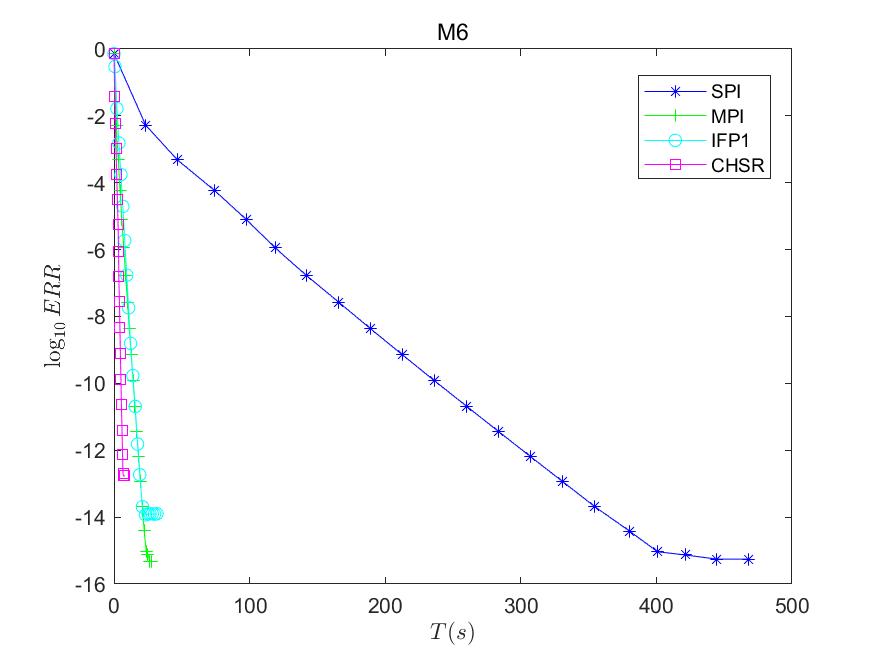}
	\includegraphics[width=0.32\textwidth,height=0.20\textheight]{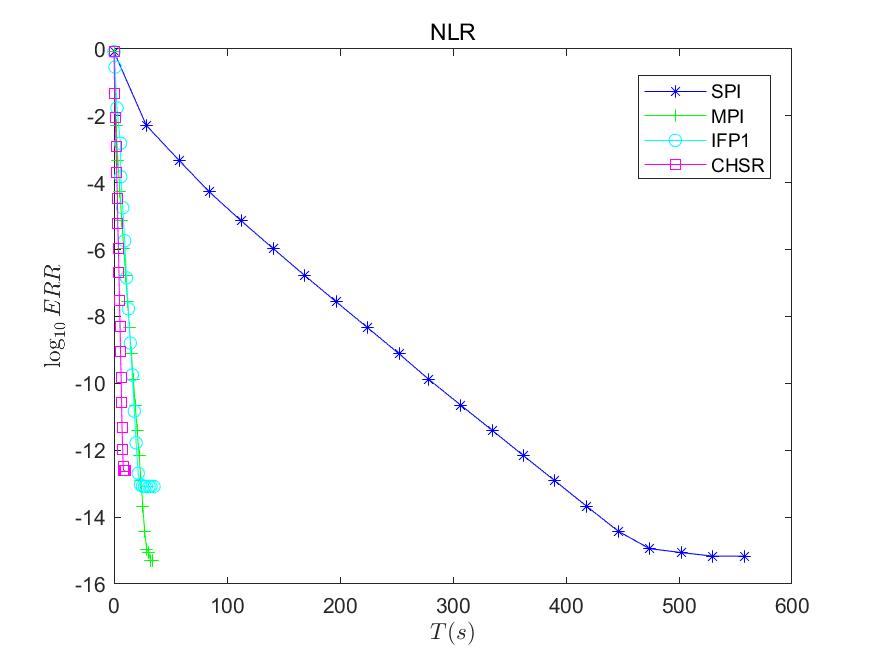}
	\includegraphics[width=0.32\textwidth,height=0.20\textheight]{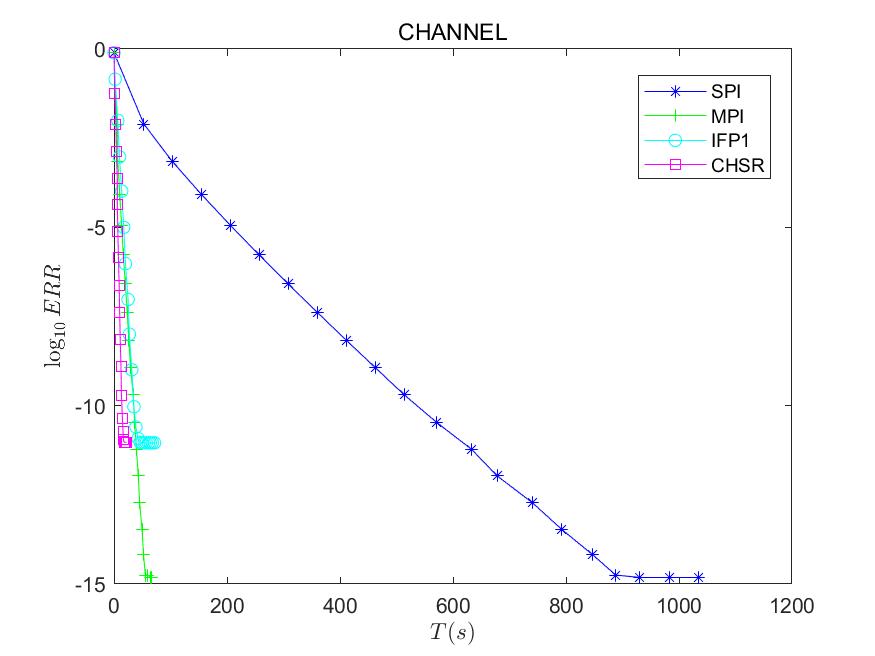}
	\includegraphics[width=0.32\textwidth,height=0.20\textheight]{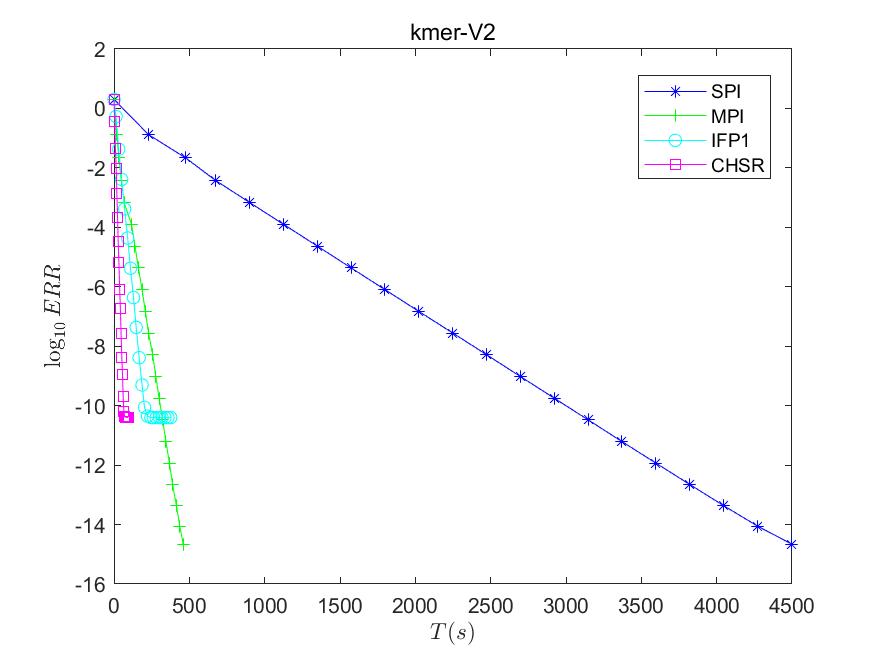}
	\caption{$T$ versus $ERR$}
	\label{err}
\end{figure}

\begin{table}[h]
\footnotesize
\centering
\begin{tabular}{l c c c c c c c c}
	\toprule
	\multirow{2}*{Data sets}   &\multicolumn{2}{c}{SPI}   &\multicolumn{2}{c}{MPI}  &\multicolumn{2}{c}{ITA}   &\multicolumn{2}{c}{CPAA}  \\
	\cline{2-9}
	               &$k$    &$T$         &$k$      &$T$          &$k$      &$T$          &$k$               &$T$                \\
	\midrule
	NACA0015       &20     &13.541      &20       &0.887        &-        &1.390        &\textbf{12}       &\textbf{0.604}       \\
	delaunay-n21   &20     &20.175      &20       &1.854        &-        &2.921        &\textbf{12}       &\textbf{1.111}       \\
	M6             &20     &46.771      &20       &3.130        &-        &5.097        &\textbf{12}       &\textbf{1.999}        \\
	NLR            &20     &57.876      &20       &3.467        &-        &5.983        &\textbf{12}       &\textbf{2.379}        \\
	CHANNEL        &20     &102.828     &20       &6.819        &-        &9.689        &\textbf{12}       &\textbf{5.845}        \\
	kmer-V2        &40     &899.406     &40       &67.068       &-        &70.346       &\textbf{15}       &\textbf{23.597}       \\
	\bottomrule
\end{tabular}
\caption{Iterations rounds and CPU time consumption when $ERR<10^{-3}$}
\label{table3}
\end{table}

\section{Conclusion}
How to compute PageRank more efficiently is an attractive problem. In this paper, a parallel PageRank algorithm specially for undirected graph called CPAA is proposed via Chebyshev Polynomial approximation. Both the theoretical analysis and experimental results indicate that CPAA has higher convergence rate and less computation amount compared with the Power method. Moreover, CPAA implies a more profound connotation, i.e., the symmetry of undirected graph and the density of diagonalizable matrix. Based on the latter, the problem of computing PageRank is converted to the problem of approximating $f(x)=(1-cx)^{-1}$. This is a innovative approach for PageRank computing. In the future, some other orthogonal polynomials\textemdash Laguerre polynomial, for example\textemdash can be taken into consideration.  
\bibliographystyle{elsarticle-num} 
\bibliography{ref}

\end{document}